\title{\LARGE \bf
Sufficient Conditions for Detectability of Approximately Discretized Nonlinear Systems
}
\author{Seth Siriya, Julian D. Schiller, Victor G. Lopez, and Matthias A. M{\"u}ller
\thanks{The authors are with the Leibniz University Hannover, Institute of Automatic Control, 30167 Hannover (e-mail: {\tt\small \{siriya, schiller, lopez, mueller\}@irt.uni-hannover.de}).}%
\thanks{This work has received funding from the European Research Council (ERC) under the European Union’s Horizon 2020 research and innovation programme (grant agreement No 948679).}
}
\newtheorem{theorem}{Theorem}
\newtheorem{assumption}{Assumption}
\newtheorem{definition}{Definition}
\newtheorem{lemma}{Lemma}
\newtheorem{remark}{Remark}
\newtheorem{corollary}{Corollary}
\newtheorem{proposition}{Proposition}
\newcommand{\deriv}[1]{\partial_{#1}}
\newcommand{\Parenth}[1]{\mleft ( #1 \mright )}
\newcommand{\Brace}[1]{\mleft \{ #1 \mright \} }
\newcommand{\Brack}[1]{\mleft [ #1 \mright ] }
\newcommand{\Norm}[1]{\mleft \Vert #1 \mright \Vert}
\DeclareMathOperator{\lambdamin}{\lambda_{\textnormal{min}}}
\DeclareMathOperator{\lambdamax}{\lambda_{\textnormal{max}}}
\begin{document}

\maketitle
\thispagestyle{empty}
\pagestyle{empty}

\begin{abstract}

In many sampled-data applications, observers are designed based on approximately discretized models of continuous-time systems, where usually only the discretized system is analyzed in terms of its detectability.
In this paper, we show that if the continuous-time system satisfies certain linear matrix inequality (LMI) conditions, and the sampling period of the discretization scheme is sufficiently small, then the whole family of discretized systems (parameterized by the sampling period) satisfies analogous discrete-time LMI conditions that imply detectability.
Our results are applicable to general discretization schemes, as long as they produce approximate models whose linearizations are in some sense consistent with the linearizations of the continuous-time ones.
We explicitly show that the Euler and second-order Runge-Kutta methods satisfy this condition.
A batch-reactor system example is provided to highlight the usefulness of our results from a practical perspective. 

\end{abstract}

\section{Introduction}
Observers are concerned with 
reconstructing the states of a system based on available information from its inputs and outputs.
In many applications, implementing an observer on a digital computer first involves discretizing a continuous-time (CT) system model using a numerical method (e.g., Euler or Runge-Kutta (RK)) with some sampling period, then designing an observer based on the approximate discrete-time (ADT) system.
Stability of the designed observer requires detectability, and so verifying this property is important.

The notion of detectability considered in this work is \textit{incremental input/output-to-state stability} (i-IOSS).
It was first introduced
in \cite{sontag1997output}, where it was shown to be necessary for the existence of a robustly stable observer (compare also \cite{allan2021nonlinear,knufer2020time}), and later also sufficient in \cite{knufer2023nonlinear}.
In recent years, i-IOSS has extensively been employed as a suitable notion for detectability, 
particularly in
the field of optimization-based state estimation (see, e.g., \cite{allan2019lyapunov,allan2021robust,knufer2018robust,knufer2023nonlinear} and \cite{hu2023generic}).
Recently, \cite{schiller2023lyapunov} showed that i-IOSS Lyapunov functions can be constructed for DT nonlinear systems if they satisfy certain LMI conditions, which can be numerically verified in practice (e.g., using gridding, sum-of-squares (SOS) programming \cite{parrilo2003semidefinite,wei2022contraction}, or linear-parameter-varying (LPV) embeddings \cite{koelewijn2021incremental}).
Analogous LMIs can be used to check i-IOSS for CT systems (see \cite{schiller2024robust}), and differential and incremental dissipativity in both the CT and 
discrete-time (DT)
setting (see \cite{verhoek2023convex,koelewijn2021incremental,koelewijn2024convex}).
Note that current DT LMI conditions such as those in \cite{schiller2023lyapunov} (see also \cite{arezki2023lmi}) can already be used to check i-IOSS of an ADT system corresponding to a \textit{specific} sampling period.
However, 
to test i-IOSS for \textit{multiple} sampling periods, 
one
must verify the LMI conditions for each sampling period, which can be time-consuming.
Therefore, it is desirable to obtain conditions under which a whole family (parameterized by sampling period) of ADT systems is i-IOSS.
Since the family of ADT systems is produced from a single CT system, a reasonable condition may involve the CT LMI conditions in \cite{schiller2024robust}. 
This is also desirable since CT dynamics are often simpler than ADT dynamics, resulting in faster verification of the LMIs. However, \cite{schiller2023lyapunov,schiller2024robust,verhoek2023convex,koelewijn2021incremental}, and \cite{koelewijn2024convex}, all focus on either the CT case or DT case separately, and do not investigate what happens during discretization.

Several results 
for
sampled-data systems 
draw connections between CT and DT notions of stability under fast sampling.
In \cite{laila2002open}, it was shown that when a CT controller is designed to satisfy a dissipation inequality for the CT system, the exactly discretized system driven by the corresponding sample-and-hold controller satisfies a similar dissipation inequality in a practical sense.
In \cite{nesic1999sufficient,nesic2004framework}, conditions were provided 
so
a controller that stabilizes an ADT system also practically stabilizes an exactly discretized system.
Analogous results are obtained for sampled-data observers in \cite{arcak2004framework}, which studied the conditions under which stable observers designed for ADT systems estimate the states of the exactly discretized system. However, it focuses on what happens \textit{after} a stable observer for the ADT system has already been successfully designed, and not whether it is even possible in the first place.
On the other hand, \cite{jafarpour2021robust} showed that contractivity of a CT system is preserved in Euler-approximated DT systems.
    
To the best of our knowledge, results 
that investigate verifiable properties of the CT system and approximation scheme that lead to i-IOSS of the associated family of ADT systems are missing from the literature.
Our contributions towards filling this gap are as follows.

Firstly, under a Lipschitz assumption on the CT system, we show that if the LMIs from \cite{schiller2024robust} for i-IOSS can be verified for the CT system, the family of associated ADT systems satisfy the 
DT LMIs for i-IOSS from \cite{schiller2023lyapunov}, as long as the sampling period is sufficiently small. 
Our result is applicable to general approximate discretization schemes, as long as they produce ADT system dynamics whose linearizations are \textit{consistent} with the linearizations of the CT systems. By
``consistent'', 
we mean that the linearizations of the finite-difference approximation of the CT dynamics using the ADT dynamics closely match the linearizations of the actual CT dynamics if the sampling period is sufficiently small.
As a corollary, an i-IOSS Lyapunov function can be constructed for the ADT system.
Secondly, we show that 
Euler and second-order Runge-Kutta 
discretization produces
ADT systems with linearizations that are consistent with those of the CT system, under some regularity conditions on the CT system that are satisfied when the ADT system evolves on a compact set.
Lastly, a batch-reactor system example is provided 
showing
it is 
computationally efficient to verify LMIs for the CT system compared to the ADT system, 
demonstrating
the advantages of our main results. 

\paragraph*{Notation}
Let $\Norm{x}_2$ denote the 2-norm of the vector $x \in \mathbb{R}^n$.
The minimal and maximal eigenvalues of a real symmetric matrix $P \in \mathbb{R}^{n \times n}$ are denoted by $\lambdamin(P)$ and $\lambdamax(P)$ respectively.
The $n \times n$ identity matrix is denoted by $I_n$.
We use $\mathbb{S}_+^{n}$ and $\mathbb{S}_{++}^{n}$ to denote the set of real $n \times n$ symmetric positive semidefinite  and positive definite matrices respectively.
Given a matrix $P \in \mathbb{R}^{n \times m}$, $\Norm{P}_2$ denotes its induced 2-norm.
Given a vector $x \in \mathbb{R}^n$ and a matrix $P \in \mathbb{S}_{++}^n$, $\Norm{x}_P = \sqrt{x^{\top} P x}$ denotes the weighted 2-norm.
A function $\alpha : \mathbb{R}_{\geq 0} \rightarrow \mathbb{R}_{\geq 0}$ is of class $\mathcal{K}$ if it is continuous, strictly increasing, and $\alpha(0) = 0$.
\section{Problem Setup} \label{sec:problem-setup}
In this section, we first introduce the CT system of interest, followed by the ADT system for which we wish to establish i-IOSS. Then, we introduce the CT and DT LMI conditions that we study in this work.

We consider the CT system
\begin{align}
    \dot{x} = f(x,u,d), \label{eqn:system-dynamics}\\
    y = h(x,u,d), \label{eqn:output-equation}
\end{align}
with state $x \in \mathbb{R}^n$, output $y \in \mathbb{R}^p$, system input $u\in \mathbb{R}^q$, and time-varying parameter $d \in \mathbb{R}^m$.
The mappings $f:\mathbb{R}^n\times\mathbb{R}^q\times\mathbb{R}^m \rightarrow \mathbb{R}^n$ and $h:\mathbb{R}^n\times\mathbb{R}^q\times\mathbb{R}^m\rightarrow \mathbb{R}^p$ represent the system dynamics and output function. 
\begin{remark}
    In the i-IOSS context, it is common to consider $u$ as a ``system input'' and $d$ as ``time-varying parameter''. When i-IOSS is used as a detectability property, $u$ describes the  unknown inputs such as process disturbances and measurement noise, and $d$ the known inputs such as control signals, which is consistent with \cite{schiller2023integral} and \cite{allan2021nonlinear}.
\end{remark}

Next, we assume that a family (parameterized by sampling period $\tau > 0$) of DT models $F_{\tau}: \mathbb{R}^n\times\mathbb{R}^q\times\mathbb{R}^m \rightarrow \mathbb{R}^n$ can be obtained by approximately discretizing the CT system dynamics \eqref{eqn:system-dynamics}.
Examples of discretization methods that can produce such a family of models include the Euler method, the family of Runge-Kutta (RK) methods, and various multistep methods, just to name a few (see \cite{Burden2010} for details). At this stage, we do not specify a particular discretization scheme, given our aim to derive results applicable to a whole class of schemes.
We can then define a family of ADT systems parameterized by sampling period $\tau > 0$ as follows:
\begin{align}
    x^+ &= F_{\tau}(x,u,d), \label{eqn:approx-system-dynamics}\\
    y &= h(x,u,d), \label{eqn:approx-system-output-equation}
\end{align}
where $x^+ \in \mathbb{R}^n$ is the successor state, $x \in \mathbb{R}^n$ is the current state,  $u \in \mathbb{R}^q$ is the system input, $d \in \mathbb{R}^m$ is some time-varying parameter, and $y \in \mathbb{R}^p$ is the system output. Note that the output equation \eqref{eqn:approx-system-output-equation} reuses the output function $h$ from the CT system dynamics \eqref{eqn:output-equation}.

Our aim is to understand whether i-IOSS of the family of ADT systems \eqref{eqn:approx-system-dynamics}-\eqref{eqn:approx-system-output-equation} can be established under checkable conditions on the CT system \eqref{eqn:system-dynamics}-\eqref{eqn:output-equation}, and the scheme used to produce the ADT system. We 
approach this by studying whether 
verifying
the LMI conditions from \cite{schiller2024robust} for the CT system enables verification of 
analogous DT LMI conditions \cite{schiller2023lyapunov} for the ADT systems. 
These conditions are of interest because they enable the construction of i-IOSS Lyapunov functions for CT and DT systems respectively, 
which
are sufficient for i-IOSS (see \cite{schiller2023integral} and \cite{rawlings2017model}).
The remainder of this section is dedicated to introducing the LMI conditions.

Before providing the CT LMI conditions, we first need to assume that the CT system is continuously differentiable.
\begin{assumption} \label{assump:ct-cont-diff}
    The CT system dynamics $f$ \eqref{eqn:system-dynamics} is continuously differentiable.
\end{assumption}
\begin{assumption}
\label{assump:h-cont-diff}
    The output function $h$ (used in \eqref{eqn:output-equation} and \eqref{eqn:approx-system-output-equation}) is continuously differentiable.
\end{assumption}
Under Ass.~\ref{assump:ct-cont-diff}-\ref{assump:h-cont-diff}, the linearizations of $f$ and $h$ with respect to $x$ and $u$ exist, and are denoted by
\begin{align}
    A(x,u,d) := \deriv{x} f(x,u,d), \ B(x,u,d) := \deriv{u} f(x,u,d), \quad \label{eqn:ct-linearized-AB}\\
    C(x,u,d) := \deriv{x} h(x,u,d), \ 
    D(x,u,d) := \deriv{u} h(x,u,d), \quad \label{eqn:ct-linearized-CD}
\end{align}
where $\deriv{x}$ and $\deriv{u}$ denote the Jacobian with respect to $x$ and $u$, respectively.
Whenever it is obvious, we hide the dependency of the linearizations \eqref{eqn:ct-linearized-AB}-\eqref{eqn:ct-linearized-CD} on $(x,u,d)$.
We now provide the CT LMI conditions below.
\begin{definition}[{CT LMI conditions \cite[Ass.~1]{schiller2024robust}}] \label{def:ct-lmi}
    Consider the CT system \eqref{eqn:system-dynamics}-\eqref{eqn:output-equation} satisfying Ass.~\ref{assump:ct-cont-diff}-\ref{assump:h-cont-diff}.
    This system is said to satisfy the \textit{CT LMI conditions} on the set $\mathcal{X}\times\mathcal{U}\times\mathcal{D} \subseteq \mathbb{R}^n\times\mathbb{R}^q\times\mathbb{R}^m$ with Lyapunov matrix $P \in \mathbb{S}_{++}^n$, supply matrices $Q\in\mathbb{S}_{++}^q$ and $R \in \mathbb{S}_{++}^p$, and factor $\kappa > 0$, 
    if for all $(x,u,d) \in {\mathcal{X}}\times{\mathcal{U}}\times{\mathcal{D}}$,
    \begin{align}
        \begin{bmatrix}
            P A + A^{\top} P + \kappa P - C^{\top} R C & P B - C^{\top} R D \\
            B^{\top} P - D^{\top} R C & -D^{\top} R D - Q
        \end{bmatrix} \preceq 0. \quad \label{eqn:thm:ct-lmi-implies-dt-lmi-diff-quot-ct-condition}
    \end{align}
\end{definition}
\begin{remark} \label{rem:finite}
    Note that infinite sets of LMIs such as \eqref{eqn:thm:ct-lmi-implies-dt-lmi-diff-quot-ct-condition} are impossible to verify in general. However, when ${\mathcal{X}}\times{\mathcal{U}}\times{\mathcal{D}}$ is compact,
    gridding can be used to approximately solve it using a finite number of points, transforming the problem into a finite set of LMIs that can be solved via semidefinite programming. Other methods such as sum-of-squares (SOS) optimization and linear-parameter-varying (LPV) embeddings can also be used to transform the problem.
\end{remark}

Next, before providing the LMI conditions for the ADT system, we also assume it is continuously differentiable.
\begin{assumption} \label{assump:dt-diff}
    The ADT system dynamics $F_{\tau}$ \eqref{eqn:approx-system-dynamics} is continuously differentiable 
    for all $\tau > 0$. 
\end{assumption}

Similarly to the CT case, this ensures that the linearizations of the family of ADT system dynamics with respect to $x$ and $u$ exist, and are denoted by
\begin{align}
    \tilde{A}_{\tau}(x,u,d) := \deriv{x} F_{\tau}(x,u,d), \label{eqn:dt-linearized-A}\\
    \tilde{B}_{\tau}(x,u,d) := \deriv{u} F_{\tau}(x,u,d), \label{eqn:dt-linearized-B}
\end{align}
for $\tau > 0$.
Again, whenever it is obvious, we hide the dependency of the linearizations on $(x,u,d)$.

We now define the DT LMI conditions in the context of the ADT systems in \eqref{eqn:approx-system-dynamics}-\eqref{eqn:approx-system-output-equation} studied in this work.
\begin{definition}[{DT LMI conditions \cite[Cor.~3]{schiller2023lyapunov}}]\label{def:dt-lmi}
    Consider the ADT system \eqref{eqn:approx-system-dynamics}-\eqref{eqn:approx-system-output-equation} under Ass.~\ref{assump:h-cont-diff}-\ref{assump:dt-diff} with a specific sampling period $\tau >0$.
    This system is said to satisfy the \textit{DT LMI conditions} on the set $\mathcal{X}\times\mathcal{U}\times\mathcal{D}\subseteq \mathbb{R}^n\times\mathbb{R}^q\times\mathbb{R}^m$ with Lyapunov matrix $P \in \mathbb{S}_{++}^n$, supply matrices $Q \in \mathbb{S}_+^q$ and $R \in \mathbb{S}_+^p$, and factor $\eta \in [0,1)$,
    if for all $(x,u,d) \in {\mathcal{X}}\times{\mathcal{U}}\times{\mathcal{D}}$,
    \begin{align}
        \begin{bmatrix}
            \tilde{A}_{\tau}^{\top} P \tilde{A}_{\tau} - \eta P -  C^{\top} R C & \tilde{A}_{\tau}^{\top} P \tilde{B}_{\tau} - C^{\top}RD \\
            \tilde{B}_{\tau}^{\top} P \tilde{A}_{\tau} - D^{\top} R C & \tilde{B}_{\tau}^{\top} P \tilde{B}_{\tau} - Q - D^{\top} R D
        \end{bmatrix} \preceq 0. \quad \label{eqn:prop:lyapunov-function-lmi-condition}
    \end{align}
\end{definition}
\section{Results} \label{sec:results}

Sec.~\ref{sec:detectability-class} contains the main result of this work. It says that under some regularity conditions,
verifying the CT LMIs in \eqref{eqn:thm:ct-lmi-implies-dt-lmi-diff-quot-ct-condition} for the CT system \eqref{eqn:system-dynamics}-\eqref{eqn:output-equation} allows us to establish that the family of ADT systems from \eqref{eqn:approx-system-dynamics}-\eqref{eqn:approx-system-output-equation} satisfies the DT LMIs in \eqref{eqn:prop:lyapunov-function-lmi-condition} 
with sufficiently small sampling period.
This relies on 
the assumption that 
the linearizations of the parameterized family of ADT systems in \eqref{eqn:dt-linearized-A}-\eqref{eqn:dt-linearized-B} are \textit{consistent} with those of the CT system in \eqref{eqn:ct-linearized-AB}. 
In Sec.~\ref{sec:i-ioss-lyapunov}, we highlight that the LMI result can be used to construct an i-IOSS Lyapunov function for the family of ADT systems. Then, in Sec.~\ref{sec:consistency-euler-rk2}, we show that the Euler and second-order Runge-Kutta (RK2) discretization schemes yield consistent linearizations under 
regularity conditions on the CT system. We also show that these conditions are satisfied when the ADT system evolves on compact set, which is often not a limiting requirement in practice. All proofs are deferred to the appendix.

\subsection{Preservation of LMI conditions after approximate discretization} \label{sec:detectability-class}

We start this section by first defining the consistency of linearizations as follows.
\begin{definition}[Consistency of linearizations] \label{def:consistent-lineariztions}
    The family of linearizations $\{(\tilde{A}_{\tau},\tilde{B}_{\tau})\}_{\tau > 0}$ \eqref{eqn:dt-linearized-A}-\eqref{eqn:dt-linearized-B} of the ADT system \eqref{eqn:approx-system-dynamics}-\eqref{eqn:approx-system-output-equation} under Ass.~\ref{assump:dt-diff} is said to be consistent with the linearizations $(A,B)$ \eqref{eqn:ct-linearized-AB} of the CT system \eqref{eqn:system-dynamics}-\eqref{eqn:output-equation} under Ass.~\ref{assump:ct-cont-diff}
    on the set $\mathcal{X}\times\mathcal{U}\times\mathcal{D}\subseteq\mathbb{R}^n\times\mathbb{R}^q\times\mathbb{R}^m$
    if there exist $\rho \in \mathcal{K}$ and $\tau_0 > 0$ such that 
    \begin{align}
        \max \Brace{\Norm{\frac{\tilde{A}_{\tau}-I_n}{\tau} - A}_2, \Norm{\frac{\tilde{B}_{\tau}}{\tau} - B}_2 } \leq \rho(\tau) \label{eqn:thm:ct-lmi-implies-dt-lmi-diff-quot-diff-condition}
    \end{align}
    for all $(x,u,d) \in {\mathcal{X}}\times{\mathcal{U}}\times{\mathcal{D}}$ and $\tau \in (0,\tau_0]$.
\end{definition}

\begin{remark}
    Consistency in Def.~\ref{def:consistent-lineariztions} can be interpreted as saying that the linearizations of the finite-difference approximation of the CT dynamics $f$ based on the ADT dynamics $F_{\tau}$ (specifically $(F_{\tau}(x,u,d) - x)/\tau$) closely matches the linearizations of the CT dynamics, when $\tau$ is small. 
    This is seen
    by noticing that $\deriv{x}(F_{\tau}(x,u,d) - x)/\tau = (\tilde{A}_{\tau}-I)/\tau$ and $\deriv{u}(F_{\tau}(x,u,d) - x)/\tau = \tilde{B}_{\tau}/\tau$, recalling the definitions of the CT system linearizations in \eqref{eqn:ct-linearized-AB}, and using \eqref{eqn:thm:ct-lmi-implies-dt-lmi-diff-quot-diff-condition}.    
    A similar notion of consistency is used in \cite{nesic1999sufficient}, \cite{nesic2004framework}, and \cite{arcak2004framework} which directly relates the dynamics of the ADT system and exactly discretized system, rather than involving linearizations.
\end{remark}

We now provide our main result in Thm.~\ref{thm:ct-implies-dt}. It says that under a Lipschitz condition on the CT system, and assuming consistency of the linearizations, the family of ADT systems satisfies the DT LMI conditions when the CT LMI conditions are satisfied and the sampling period is small.
\begin{theorem} \label{thm:ct-implies-dt}
    Consider a set $\mathcal{X}\times\mathcal{U}\times\mathcal{D}\subseteq\mathbb{R}^n\times\mathbb{R}^q\times\mathbb{R}^m$, the CT system \eqref{eqn:system-dynamics}-\eqref{eqn:output-equation}, and the family of associated approximate DT systems \eqref{eqn:approx-system-dynamics}-\eqref{eqn:approx-system-output-equation} under Ass.~\ref{assump:ct-cont-diff}-\ref{assump:dt-diff}.
    Suppose the following conditions are satisfied:
    \begin{enumerate}
        \item There exists $L_f > 0$ such that $f$ in \eqref{eqn:system-dynamics} is $L_f$-Lipschitz on ${\mathcal{X}}\times{\mathcal{U}}\times{\mathcal{D}}$, i.e., $\sup_{(x,u,d)\in {\mathcal{X}}\times{\mathcal{U}}\times{\mathcal{D}}} \Norm{\begin{bmatrix}
            A & B
        \end{bmatrix}}_2 \leq L_f $, with $(A,B)$ in \eqref{eqn:ct-linearized-AB};
        \item There exist $\rho\in \mathcal{K}$ and $\tau_0 > 0$ such that $\{(\tilde{A}_{\tau},\tilde{B}_{\tau})\}_{\tau > 0}$ from \eqref{eqn:dt-linearized-A}-\eqref{eqn:dt-linearized-B} is consistent with $(A,B)$ from \eqref{eqn:ct-linearized-AB}
        on ${\mathcal{X}}\times{\mathcal{U}}\times{\mathcal{D}}$ 
        in the sense of Def.~\ref{def:consistent-lineariztions};
        \item There exist $P \in \mathbb{S}_{++}^n$, $Q \in \mathbb{S}_{++}^q$, $R \in \mathbb{S}_{++}^p$, and $\kappa > 0$, such that \eqref{eqn:system-dynamics}-\eqref{eqn:output-equation} satisfies the CT LMI conditions on ${\mathcal{X}}\times{\mathcal{U}}\times{\mathcal{D}}$ 
        in the sense of Def.~\ref{def:ct-lmi}.
    \end{enumerate}
    Then, the family of ADT systems \eqref{eqn:approx-system-dynamics}-\eqref{eqn:approx-system-output-equation} with sampling period $\tau \in \Parenth{0,\tau_1}$ satisfy the DT LMI conditions \eqref{eqn:prop:lyapunov-function-lmi-condition} 
    on ${\mathcal{X}}\times{\mathcal{U}}\times{\mathcal{D}}$, 
    with Lyapunov matrix $P\in \mathbb{S}_{++}^n$, supply matrices $\tilde{Q}(\tau) \in \mathbb{S}_{++}^q$ and $\tilde{R}(\tau) \in \mathbb{S}_{++}^p$, and factor $\eta(\tau) \in (0,1)$, 
    where
    \begin{align}
        \tau_1 &:= \min \Brace{ \frac{1}{\kappa} , \tau_0, \alpha^{-1}(\kappa) }, \label{eqn:theorem:ct-implies-dt-tau1} \\
        \tilde{Q}(\tau) &:= \tau \Parenth{Q + \alpha(\tau) \lambdamin(P) I_q}, \label{eqn:theorem:ct-implies-dt-Qtilde} \\
        \tilde{R}(\tau) &:= \tau R, \label{eqn:theorem:ct-implies-dt-Rtilde} \\
        \eta(\tau) &:= \tau \Parenth{ \alpha(\tau) - \kappa } + 1, \label{eqn:theorem:ct-implies-dt-eta} \\
        \alpha(\tau) &:=  \frac{\lambdamax(P)}{\lambdamin(P)}  \Parenth{4 \rho(\tau) \Parenth{1 + \tau L_f + \tau \rho(\tau)} + \tau L_f^2}. \quad  \label{eqn:theorem:ct-implies-dt-alpha1}
    \end{align}
\end{theorem}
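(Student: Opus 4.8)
The plan is to reparameterize the ADT linearizations through the finite-difference quantities $E_\tau := (\tilde A_\tau - I_n)/\tau$ and $G_\tau := \tilde B_\tau/\tau$, so that $\tilde A_\tau = I_n + \tau E_\tau$ and $\tilde B_\tau = \tau G_\tau$, and the consistency assumption (Def.~\ref{def:consistent-lineariztions}) reads $\Norm{E_\tau - A}_2 \le \rho(\tau)$, $\Norm{G_\tau - B}_2 \le \rho(\tau)$ for all $(x,u,d)\in\mathcal{X}\times\mathcal{U}\times\mathcal{D}$ and $\tau \in (0,\tau_0]$. First I would substitute these into the DT LMI matrix \eqref{eqn:prop:lyapunov-function-lmi-condition}, together with the stated expressions for $\tilde Q(\tau)$, $\tilde R(\tau)=\tau R$, and $\eta(\tau) = 1 + \tau(\alpha(\tau)-\kappa)$. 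Expanding, e.g., $\tilde A_\tau^\top P \tilde A_\tau = P + \tau(E_\tau^\top P + P E_\tau) + \tau^2 E_\tau^\top P E_\tau$, the $\tau$-independent terms cancel exactly against $-\eta(\tau) P$, the cross- and diagonal $\tau^2$-terms assemble into the rank-structured positive semidefinite block $\tau^2 \begin{bmatrix} E_\tau & G_\tau \end{bmatrix}^{\top} P \begin{bmatrix} E_\tau & G_\tau \end{bmatrix}$, and the common positive factor $\tau$ can be pulled out. What remains is to show
\begin{equation*}
  M_\tau + \tau \begin{bmatrix} E_\tau & G_\tau \end{bmatrix}^{\top} P \begin{bmatrix} E_\tau & G_\tau \end{bmatrix} - \alpha(\tau)\, \diag{P, \lambdamin(P) I_q} \preceq 0 ,
\end{equation*}
where $M_\tau$ is the left-hand side of the CT LMI \eqref{eqn:thm:ct-lmi-implies-dt-lmi-diff-quot-ct-condition} with $(A,B)$ replaced by $(E_\tau,G_\tau)$.

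The second step is to estimate the three terms. Writing $M_\tau = M_{\mathrm{CT}} + \Delta_\tau$, where $M_{\mathrm{CT}}$ is the genuine CT LMI matrix (so $M_{\mathrm{CT}} \preceq 0$ by the CT LMI assumption) and $\Delta_\tau = N_\tau + N_\tau^\top$ with $N_\tau = \begin{bmatrix} I_n \\ 0\end{bmatrix} P \begin{bmatrix} E_\tau - A & G_\tau - B\end{bmatrix}$, submultiplicativity together with $\Norm{\begin{bmatrix} E_\tau - A & G_\tau - B\end{bmatrix}}_2 \le \Norm{E_\tau-A}_2 + \Norm{G_\tau-B}_2 \le 2\rho(\tau)$ gives $\Delta_\tau \preceq 4\lambdamax(P)\rho(\tau) I$, hence $M_\tau \preceq 4\lambdamax(P)\rho(\tau) I$. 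The quadratic term is positive semidefinite and, using the Lipschitz bound $\Norm{\begin{bmatrix}A & B\end{bmatrix}}_2 \le L_f$ together with consistency, $\begin{bmatrix} E_\tau & G_\tau \end{bmatrix}^{\top} P \begin{bmatrix} E_\tau & G_\tau \end{bmatrix} \preceq \lambdamax(P)(L_f + 2\rho(\tau))^2 I$. Since $\diag{P,\lambdamin(P)I_q} \succeq \lambdamin(P) I_{n+q}$, it then suffices to verify the scalar inequality $4\lambdamax(P)\rho(\tau) + \tau\lambdamax(P)(L_f + 2\rho(\tau))^2 \le \alpha(\tau)\lambdamin(P)$; expanding the definition \eqref{eqn:theorem:ct-implies-dt-alpha1} of $\alpha(\tau)$, this holds with equality, which settles the semidefinite estimate for all $\tau \in (0,\tau_0]$.

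It then remains to check the qualitative properties of the new data. Since $\rho\in\mathcal{K}$ and $L_f>0$, we have $\alpha(\tau)>0$ for $\tau>0$, so $\tilde Q(\tau) \in \mathbb{S}_{++}^q$ and $\tilde R(\tau) \in \mathbb{S}_{++}^p$; moreover $\alpha$ is continuous, strictly increasing, $\alpha(0)=0$, and $\alpha(\tau)\to\infty$, so $\alpha\in\mathcal{K}$ and $\alpha^{-1}(\kappa)$ in \eqref{eqn:theorem:ct-implies-dt-tau1} is well defined. For $\tau\in(0,\tau_1)$: $\tau<\tau_0$ makes the consistency bound applicable; $\tau<1/\kappa$ yields $\eta(\tau) = 1 + \tau(\alpha(\tau)-\kappa) > 1 - \tau\kappa > 0$; and $\tau<\alpha^{-1}(\kappa)$ yields $\alpha(\tau)<\kappa$, hence $\eta(\tau)<1$, so $\eta(\tau)\in(0,1)$ as required.

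I expect the only real obstacle to be the bookkeeping in the first step: keeping track of all the blocks while expanding the DT matrix, recognizing that the $\tau^2$-terms close up into the single positive semidefinite factor $\begin{bmatrix} E_\tau & G_\tau \end{bmatrix}^{\top} P \begin{bmatrix} E_\tau & G_\tau \end{bmatrix}$, and matching the leftover to the precise form of $\eta(\tau)$. Beyond that, the estimates are routine norm bounds that have evidently been reverse-engineered into the definition of $\alpha(\tau)$; the one subtlety is to dominate everything by a multiple of $\lambdamin(P) I$ rather than of $P$, so that the $q$-dimensional block carrying the weight $\lambdamin(P) I_q$ is also covered.
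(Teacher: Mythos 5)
Your proof is correct and follows essentially the same route as the paper's: expand the DT LMI around the first-order (Euler/finite-difference) linearization, apply the CT LMI to the linear part, absorb the consistency perturbation and the quadratic term into $\alpha(\tau)\lambda_{\min}(P)$ via norm bounds, and then check $\eta(\tau)\in(0,1)$ for $\tau\in(0,\tau_1)$ exactly as in the paper. The only difference is bookkeeping---you work with $E_\tau=(\tilde{A}_\tau-I_n)/\tau$, $G_\tau=\tilde{B}_\tau/\tau$ and factor out $\tau$ up front, whereas the paper expands via the Euler residual $r_\tau=F_\tau-x-\tau f$---and your grouped estimates reproduce the paper's total bound (and hence its $\alpha(\tau)$) exactly.
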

\noeqref{eqn:theorem:ct-implies-dt-Qtilde} 
\noeqref{eqn:theorem:ct-implies-dt-Rtilde}
We again emphasize that an explicit discretization method has not been assumed, 
so
Thm.~\ref{thm:ct-implies-dt} is applicable to a wide range of techniques. 
In order 
to 
apply Thm.~\ref{thm:ct-implies-dt} to a specific discretization method, one 
needs to 
show that the linearizations of the family of ADT system dynamics and the original CT system are consistent.
Later on in Sec.~\ref{sec:consistency-euler-rk2}, we show this explicitly for Euler and RK2 discretization.

\begin{remark} \label{rem:lipschitz}
    Requiring that $f$ is globally Lipschitz continuous in condition~1 of Thm.~\ref{thm:ct-implies-dt} seems restrictive. However, it is satisfied when ${\mathcal{X}}$, ${\mathcal{U}}$, and ${\mathcal{D}}$ are compact due to Ass.~\ref{assump:ct-cont-diff}. This is 
    relevant when verifying LMIs via gridding (see Rem.~\ref{rem:finite}).
\end{remark}

\begin{remark} \label{rem:explicit}
    To construct the parameters/functions in  \eqref{eqn:theorem:ct-implies-dt-tau1}-\eqref{eqn:theorem:ct-implies-dt-alpha1}, aside from requiring $P$, $Q$, $R$, and $\kappa$ (obtained by verifying Def.~\ref{def:ct-lmi}), knowledge of $L_f$, $\tau_0$, and $\rho$, is also required. Later on, in Sec.~\ref{sec:consistency-euler-rk2}, we explicitly provide these quantities when the Euler and RK2 methods are used for discretization.
\end{remark}

\subsection{i-IOSS Lyapunov functions for approximately discretized systems} \label{sec:i-ioss-lyapunov}

Although hinted at previously, in this section, we will formally show that satisfaction of the conditions in Thm.~\ref{thm:ct-implies-dt} allow for the establishment of a i-IOSS Lyapunov function for the parameterized family of ADT systems. Before doing this, we formally recall relevant theory from the literature.

Firstly, we define (exponential) i-IOSS Lyapunov functions for ADT systems with a \textit{specific} sampling period.
\begin{definition}[{i-IOSS Lyapunov function \cite[Ass.~1]{schiller2024robust}}] \label{def:dt-lyapunov}
    Consider an ADT system \eqref{eqn:approx-system-dynamics}-\eqref{eqn:approx-system-output-equation} 
    with specific sampling period $\tau > 0$.
    Given a matrix $P \in \mathbb{S}_{++}^n$, the function $W(x,\tilde{x}):=\Norm{x-\tilde{x}}_P^2$ is an (exponential) \textit{i-IOSS Lyapunov function} on $\mathcal{X}\times\mathcal{U}\times\mathcal{D}\subseteq\mathbb{R}^n\times\mathbb{R}^q\times\mathbb{R}^m$ with supply matrices $Q\in\mathbb{S}_+^q$ and $R \in \mathbb{S}_+^p$, and factor $\eta \in [0,1)$, for this system, if
    \begin{align}
        &W\Parenth{ F_{\tau}(x,u,d), F_{\tau}(\tilde{x},\tilde{u},d) } \\
        &\leq \eta W(x,\tilde{x})+ \Norm{u - \tilde{u}}_Q^2 + \Norm{h(x,u,d) - h(\tilde{x},\tilde{u},d)}_R^2, \label{eqn:lyapunov-condition}
    \end{align}
    for all $d \in {\mathcal{D}}$, $x,\tilde{x} \in {\mathcal{X}}$, and $u,\tilde{u} \in {\mathcal{U}}$.
\end{definition}
Note that the existence of an i-IOSS Lyapunov function is sufficient for the system to be i-IOSS \cite{rawlings2017model}, the notion of detectability of interest in this work.

Under Ass.~\ref{assump:output-affine}, the DT LMI conditions in \eqref{eqn:prop:lyapunov-function-lmi-condition} can be used to establish an i-IOSS Lyapunov function for an ADT system \eqref{eqn:approx-system-dynamics}-\eqref{eqn:approx-system-output-equation} with a \textit{specific} sampling period, as shown in Prop.~\ref{prop:lyapunov-function}.

\begin{assumption} \label{assump:output-affine}
    The output function $h$ \eqref{eqn:output-equation} is affine in $(x,u)$.
\end{assumption}

\begin{proposition}[{i-IOSS Lyapunov function via DT LMI conditions \cite[Cor.~3]{schiller2023lyapunov}}] \label{prop:lyapunov-function}
    Consider a convex set $\mathcal{X}\times\mathcal{U}\times\mathcal{D}\subseteq\mathbb{R}^n\times\mathbb{R}^q\times\mathbb{R}^m$, and an ADT system \eqref{eqn:approx-system-dynamics}-\eqref{eqn:approx-system-output-equation} under Ass.~\ref{assump:h-cont-diff}-\ref{assump:output-affine}, with specific sampling period $\tau > 0$. Suppose it satisfies the DT LMI conditions in the sense of Def.~\ref{def:dt-lmi} on $\mathcal{X}\times\mathcal{U}\times\mathcal{D}$ for some $P \in \mathbb{S}_{++}^n$, $Q \in \mathbb{S}_{+}^q$, $R \in \mathbb{S}_{+}^p$, and $\eta \in [0,1)$. Then, $W(x,\tilde{x}) = \Norm{x - \tilde{x}}_P^2$ is an i-IOSS Lyapunov function on $\mathcal{X}\times\mathcal{U}\times\mathcal{D}$ in the sense of Def.~\ref{def:dt-lyapunov}.
\end{proposition}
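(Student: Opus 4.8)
The plan is to derive the i-IOSS Lyapunov decrease \eqref{eqn:lyapunov-condition} directly from the DT LMI \eqref{eqn:prop:lyapunov-function-lmi-condition} by integrating the quadratic form along a line segment connecting the two trajectories. First I would fix $d\in\mathcal{D}$, $x,\tilde{x}\in\mathcal{X}$, $u,\tilde{u}\in\mathcal{U}$ and, using convexity of $\mathcal{X}\times\mathcal{U}\times\mathcal{D}$, define the path $z(s):=(\tilde{x}+s(x-\tilde{x}),\,\tilde{u}+s(u-\tilde{u}),\,d)$ for $s\in[0,1]$, which stays in the set. Writing $\Delta x:=x-\tilde{x}$ and $\Delta u:=u-\tilde{u}$, the fundamental theorem of calculus gives
\begin{align}
F_\tau(x,u,d)-F_\tau(\tilde{x},\tilde{u},d) &= \int_0^1 \Bracket*{\tilde{A}_\tau(z(s))\,\Delta x + \tilde{B}_\tau(z(s))\,\Delta u}\,ds, \\
h(x,u,d)-h(\tilde{x},\tilde{u},d) &= C\,\Delta x + D\,\Delta u,
\end{align}
where the second identity is exact (not merely an integral) because Ass.~\ref{assump:output-affine} forces $C$ and $D$ to be constant. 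Here I would need the notation $\Bracket{\cdot}$ to already be available, so instead I would simply write the bracket with $\left[\,\cdot\,\right]$.

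Next I would bound $W(F_\tau(x,u,d),F_\tau(\tilde{x},\tilde{u},d))=\bigl\|\int_0^1(\tilde{A}_\tau(z(s))\Delta x+\tilde{B}_\tau(z(s))\Delta u)\,ds\bigr\|_P^2$. The standard trick is Jensen's inequality for the convex map $v\mapsto\|v\|_P^2=v^\top P v$: the norm-squared of an average is at most the average of the norms-squared, so
\begin{align}
W\bigl(F_\tau(x,u,d),F_\tau(\tilde{x},\tilde{u},d)\bigr) \leq \int_0^1 \Norm*{\tilde{A}_\tau(z(s))\,\Delta x + \tilde{B}_\tau(z(s))\,\Delta u}_P^2\,ds .
\end{align}
For each fixed $s$, the integrand equals $\begin{bmatrix}\Delta x\\ \Delta u\end{bmatrix}^\top M(z(s)) \begin{bmatrix}\Delta x\\ \Delta u\end{bmatrix}$ with $M=\begin{bmatrix}\tilde{A}_\tau^\top P\tilde{A}_\tau & \tilde{A}_\tau^\top P\tilde{B}_\tau\\ \tilde{B}_\tau^\top P\tilde{A}_\tau & \tilde{B}_\tau^\top P\tilde{B}_\tau\end{bmatrix}$. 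The LMI \eqref{eqn:prop:lyapunov-function-lmi-condition} says exactly that $M(z(s)) \preceq \begin{bmatrix}\eta P + C^\top R C & C^\top R D\\ D^\top R C & Q + D^\top R D\end{bmatrix}$ pointwise on the set, so the integrand is bounded by $\eta\,\|\Delta x\|_P^2 + \|C\Delta x + D\Delta u\|_R^2 + \|\Delta u\|_Q^2$. Since this bound is independent of $s$, integrating over $[0,1]$ leaves it unchanged, and substituting the output identity $C\Delta x+D\Delta u = h(x,u,d)-h(\tilde{x},\tilde{u},d)$ yields precisely \eqref{eqn:lyapunov-condition}; recalling $W(x,\tilde{x})=\|\Delta x\|_P^2$ and $\eta\in[0,1)$ finishes the argument.

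I expect the only real subtlety — and the step that genuinely uses Ass.~\ref{assump:output-affine} and the convexity of the domain — is justifying the line-integral representation and ensuring the whole segment $z(s)$ lies in $\mathcal{X}\times\mathcal{U}\times\mathcal{D}$ so that both the LMI and the Jacobian formulas are valid along it; the affineness of $h$ is what makes the output term exact rather than an inequality and lets $R$ be merely positive semidefinite. Everything else (Jensen, the Schur-type rearrangement of the LMI block) is routine. This is, in essence, the proof of \cite[Cor.~3]{schiller2023lyapunov} reproduced in the present notation, so I would keep it brief and cite that source for the details.
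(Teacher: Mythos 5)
Your proof is correct: the line-segment representation (valid by convexity and Ass.~\ref{assump:dt-diff}), Jensen's inequality for $v\mapsto\|v\|_P^2$, the pointwise LMI bound along $z(s)$, and the exact output identity from Ass.~\ref{assump:output-affine} together give \eqref{eqn:lyapunov-condition}. The paper itself does not prove this proposition but imports it by citation from \cite[Cor.~3]{schiller2023lyapunov}, and your argument is essentially the standard proof given there, so there is nothing to add.
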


Finally, we provide Cor.~\ref{cor:ct-lmi-implies-lyap}. It says that under the premise of Thm.~\ref{thm:ct-implies-dt} and additionally Ass.~\ref{assump:output-affine},
an i-IOSS Lyapunov function for a \textit{family} of ADT system can be established. 

\begin{corollary} \label{cor:ct-lmi-implies-lyap}
    Consider a convex set $\mathcal{X}\times\mathcal{U}\times\mathcal{D}\subseteq\mathbb{R}^n\times\mathbb{R}^q\times\mathbb{R}^m$, and the CT system \eqref{eqn:system-dynamics}-\eqref{eqn:output-equation} and the family of ADT systems \eqref{eqn:approx-system-dynamics}-\eqref{eqn:approx-system-output-equation} under Ass.~\ref{assump:ct-cont-diff}-\ref{assump:output-affine}. Suppose conditions~1-2 in Thm.~\ref{thm:ct-implies-dt} are satisfied. Then, for each ADT system \eqref{eqn:approx-system-dynamics}-\eqref{eqn:approx-system-output-equation} with sampling period $\tau \in (0,\tau_1)$, 
    $W(x,\tilde{x}) = \Norm{x - \tilde{x}}_P^2$ is an i-IOSS Lyapunov function on $\mathcal{X}\times\mathcal{U}\times\mathcal{D}$, with supply matrices $\tilde{Q}(\tau) \in \mathbb{S}_{++}^q$ and $\tilde{R}(\tau) \in \mathbb{S}_{++}^p$, and factor $\eta(\tau) \in (0,1)$.
    Recall that $\tau_1$, $\tilde{Q}$, $\tilde{R}$, and $\eta$, are from \eqref{eqn:theorem:ct-implies-dt-tau1}-\eqref{eqn:theorem:ct-implies-dt-eta}.
\end{corollary}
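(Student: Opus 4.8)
The plan is to obtain Cor.~\ref{cor:ct-lmi-implies-lyap} as an essentially immediate consequence of chaining Thm.~\ref{thm:ct-implies-dt} with Prop.~\ref{prop:lyapunov-function}, so the proof is short and the ``work'' is purely bookkeeping. First I would invoke Thm.~\ref{thm:ct-implies-dt}: under its premise (conditions~1--2 assumed, and condition~3 being the CT LMI conditions that furnish $P$, $Q$, $R$, and $\kappa$), every ADT system \eqref{eqn:approx-system-dynamics}--\eqref{eqn:approx-system-output-equation} with sampling period $\tau \in \Parenth{0,\tau_1}$ satisfies the DT LMI conditions of Def.~\ref{def:dt-lmi} on ${\mathcal{X}}\times{\mathcal{U}}\times{\mathcal{D}}$ with Lyapunov matrix $P$, supply matrices $\tilde{Q}(\tau)$, $\tilde{R}(\tau)$, and factor $\eta(\tau)$, as defined in \eqref{eqn:theorem:ct-implies-dt-tau1}--\eqref{eqn:theorem:ct-implies-dt-alpha1}.

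Next I would fix an arbitrary $\tau \in \Parenth{0,\tau_1}$ and check that the hypotheses of Prop.~\ref{prop:lyapunov-function} hold for this $\tau$. The set ${\mathcal{X}}\times{\mathcal{U}}\times{\mathcal{D}}$ is convex by assumption; Ass.~\ref{assump:h-cont-diff}--\ref{assump:output-affine} hold by assumption (Ass.~\ref{assump:output-affine} being the extra hypothesis of the corollary, Ass.~\ref{assump:dt-diff} the differentiability of $F_\tau$); and the DT LMI conditions hold with $P \in \mathbb{S}_{++}^n$ and the parameters $\tilde{Q}(\tau)$, $\tilde{R}(\tau)$, $\eta(\tau)$ by the previous step. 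The only subtlety is a sign-definiteness downgrade: Prop.~\ref{prop:lyapunov-function} only requires $Q \in \mathbb{S}_+^q$, $R \in \mathbb{S}_+^p$, and $\eta \in [0,1)$, whereas Thm.~\ref{thm:ct-implies-dt} delivers the stronger $\tilde{Q}(\tau) \in \mathbb{S}_{++}^q \subset \mathbb{S}_+^q$, $\tilde{R}(\tau) \in \mathbb{S}_{++}^p \subset \mathbb{S}_+^p$, and $\eta(\tau) \in (0,1) \subset [0,1)$, so the required inclusions are satisfied trivially.

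Applying Prop.~\ref{prop:lyapunov-function} to this $\tau$ then yields that $W(x,\tilde{x}) = \Norm{x-\tilde{x}}_P^2$ is an i-IOSS Lyapunov function on ${\mathcal{X}}\times{\mathcal{U}}\times{\mathcal{D}}$ with supply matrices $\tilde{Q}(\tau)$, $\tilde{R}(\tau)$, and factor $\eta(\tau)$, in the sense of Def.~\ref{def:dt-lyapunov}. Since $\tau \in \Parenth{0,\tau_1}$ was arbitrary, the conclusion holds for the whole family, which completes the argument. I do not anticipate a genuine obstacle: all the mathematical content already resides in Thm.~\ref{thm:ct-implies-dt} and Prop.~\ref{prop:lyapunov-function}, and the only care needed is verifying that the strict/positive-definite conclusions of the theorem fall within the non-strict/positive-semidefinite hypotheses of the proposition, and that the convexity and smoothness requirements of the proposition are among the standing assumptions.
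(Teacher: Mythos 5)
Your proposal is correct and matches the paper's own proof: both chain Thm.~\ref{thm:ct-implies-dt} (DT LMI conditions hold for each $\tau \in (0,\tau_1)$) with Prop.~\ref{prop:lyapunov-function} under the convexity and Ass.~\ref{assump:h-cont-diff}--\ref{assump:output-affine} hypotheses, noting the strict conclusions fit within the proposition's non-strict requirements. Your reading that the CT LMI condition (condition~3) is implicitly part of the premise, furnishing $P$, $Q$, $R$, $\kappa$, is also consistent with the paper's intent.
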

\begin{remark}
    The Lyapunov function $W(x,\tilde{x})$, supply matrices $\tilde{Q}(\tau)$ and $\tilde{R}(\tau)$, factor $\eta(\tau)$, and sampling period bound $\tau_1$, in Cor.~\ref{cor:ct-lmi-implies-lyap}, can be constructed given explicit knowledge of $L_f$, $\tau_0$, and $\rho$ (recall Rem.~\ref{rem:explicit}).
\end{remark}

\subsection{Consistency of linearizations for Euler- and RK2-discretized models} \label{sec:consistency-euler-rk2}
In Sec.~\ref{sec:detectability-class}, the discretization method needed to be chosen so that the family of linearizations for the ADT system are consistent with the CT system linearizations; i.e., we provided a result that holds for the class of discretization schemes yielding consistent linearizations. We will now explicitly show that the Euler method always satisfies this requirement, and that the RK2 method also satisfies this under (i) a Lipschitz assumption on the CT system, and (ii) a different assumption on the linearizations of the CT system.

The family of ADT models obtained via the Euler method is defined using the CT system model as
\begin{align}
    F^{\textnormal{Euler}}_{\tau}(x,u,d) := x + \tau f(x,u,d) \label{eqn:euler-model}
\end{align}
for $\tau > 0$.
The consistency of the linearizations of these dynamics is explicitly characterized in Lem.~\ref{lemma:euler}.
\begin{lemma} \label{lemma:euler}
    Consider the CT system \eqref{eqn:system-dynamics}-\eqref{eqn:output-equation} and the family of ADT systems \eqref{eqn:approx-system-dynamics}-\eqref{eqn:approx-system-output-equation} corresponding to Euler discretization \eqref{eqn:euler-model}, such that $F_{\tau} = F^{\textnormal{Euler}}_{\tau}$ for $\tau > 0$. Suppose Ass.~\ref{assump:ct-cont-diff} holds. 
    Then, Ass.~\ref{assump:dt-diff} is satisfied, and moreover, for any $\rho \in \mathcal{K}$ and $\tau_0 > 0$, $\{(\tilde{A}_{\tau},\tilde{B}_{\tau})\}_{\tau > 0}$ \eqref{eqn:dt-linearized-A}-\eqref{eqn:dt-linearized-B} is consistent with $(A,B)$ \eqref{eqn:ct-linearized-AB} on $\mathbb{R}^n\times\mathbb{R}^q\times\mathbb{R}^m$ in the sense of Def.~\ref{def:consistent-lineariztions}, i.e., condition~2 of Thm.~\ref{thm:ct-implies-dt} is satisfied.
\end{lemma}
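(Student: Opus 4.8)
The plan is to verify the two claims of Lemma~\ref{lemma:euler} in sequence: first that Euler discretization produces a continuously differentiable $F_\tau$ (i.e.\ Ass.~\ref{assump:dt-diff} holds), and second that the resulting linearizations are consistent with $(A,B)$ for \emph{any} choice of $\rho \in \mathcal{K}$ and $\tau_0 > 0$. The first part is immediate: from \eqref{eqn:euler-model}, $F^{\textnormal{Euler}}_\tau(x,u,d) = x + \tau f(x,u,d)$ is a sum of the identity map and a $\mathcal{C}^1$ function (by Ass.~\ref{assump:ct-cont-diff}) scaled by the constant $\tau > 0$, hence is itself $\mathcal{C}^1$ in $(x,u)$ for every fixed $\tau > 0$.

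The second part is where the ``for any $\rho$, $\tau_0$'' phrasing is explained: the left-hand side of \eqref{eqn:thm:ct-lmi-implies-dt-lmi-diff-quot-diff-condition} is in fact identically zero for the Euler scheme. First I would compute the linearizations directly from \eqref{eqn:euler-model}: $\tilde{A}_\tau = \deriv{x}F^{\textnormal{Euler}}_\tau = I_n + \tau\,\deriv{x}f = I_n + \tau A$ and $\tilde{B}_\tau = \deriv{u}F^{\textnormal{Euler}}_\tau = \tau\,\deriv{u}f = \tau B$, where $(A,B)$ are exactly the CT linearizations from \eqref{eqn:ct-linearized-AB} evaluated at the same $(x,u,d)$. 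Substituting these into the finite-difference quotients gives $(\tilde{A}_\tau - I_n)/\tau = A$ and $\tilde{B}_\tau/\tau = B$, so both norms in \eqref{eqn:thm:ct-lmi-implies-dt-lmi-diff-quot-diff-condition} vanish pointwise on all of $\mathbb{R}^n\times\mathbb{R}^q\times\mathbb{R}^m$. Therefore the bound $\max\{\cdots\} \le \rho(\tau)$ holds trivially for every $\tau \in (0,\tau_0]$ regardless of which $\rho \in \mathcal{K}$ and $\tau_0 > 0$ one picks, which is precisely the claimed statement; in particular condition~2 of Thm.~\ref{thm:ct-implies-dt} is met.

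There is essentially no obstacle here — the lemma is a direct consequence of the linearity of differentiation and the special structure of the Euler map, in which the only $\tau$-dependence is the single scalar multiplier in front of $f$. The one point deserving a careful word is that the Jacobians $\deriv{x}f$ and $\deriv{u}f$ appearing in $\tilde{A}_\tau,\tilde{B}_\tau$ are genuinely the same matrix-valued functions $A(x,u,d),B(x,u,d)$ as in \eqref{eqn:ct-linearized-AB} (no reparametrization or shift of evaluation point occurs), which is what makes the cancellation exact rather than merely $O(\tau)$. I would state this explicitly so the reader sees why Euler is the ``perfect'' case and why, by contrast, higher-order schemes such as RK2 will only achieve an $O(\tau)$ discrepancy and hence require the genuine Lipschitz/regularity hypotheses invoked later in the section.
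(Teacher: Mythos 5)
Your proposal is correct and follows essentially the same route as the paper's proof: observe that $F^{\textnormal{Euler}}_{\tau}$ inherits continuous differentiability from $f$, compute $\tilde{A}_{\tau}=I_n+\tau A$ and $\tilde{B}_{\tau}=\tau B$, and conclude that the finite-difference discrepancies in \eqref{eqn:thm:ct-lmi-implies-dt-lmi-diff-quot-diff-condition} vanish identically, so any $\rho\in\mathcal{K}$ and $\tau_0>0$ work. No gaps to report.
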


On the other hand, the family of models corresponding to the second-order RK (RK2) method is defined as
\begin{align}
    F^{\textnormal{RK2}}_{\tau}(x,u,d):= x + \tau f\Parenth{ x + \frac{\tau}{2} f\Parenth{x,u,d}, u,d} \label{eqn:rk2-model}
\end{align}
for $\tau > 0$.
The consistency of the linearizations of these dynamics is explicitly characterized in Lem.~\ref{lemma:rk2}.
\begin{lemma} \label{lemma:rk2}
    Consider a set $\mathcal{X}\times\mathcal{U}\times\mathcal{D}\subseteq\mathbb{R}^n\times\mathbb{R}^q\times\mathbb{R}^m$, the CT system \eqref{eqn:system-dynamics}-\eqref{eqn:output-equation}, and the family of ADT systems \eqref{eqn:approx-system-dynamics}-\eqref{eqn:approx-system-output-equation} corresponding to RK2 discretization \eqref{eqn:rk2-model}, such that $F_{\tau} = F^{\textnormal{RK2}}_{\tau}$ for $\tau > 0$. 
    Suppose the following conditions hold:
    \begin{enumerate}
        \item Ass.~\ref{assump:ct-cont-diff} holds and $f$ in \eqref{eqn:system-dynamics} is $L_f$-Lipschitz on $\mathcal{X}\times\mathcal{U}\times\mathcal{D}$;
        \item There exist $\delta_0 > 0$ and $\sigma \in \mathcal{K}$ such that for all $d \in {\mathcal{D}}$, $u \in {\mathcal{U}}$, and $x,\tilde{x} \in {\mathcal{X}}$ satisfying $\Norm{x-\tilde{x}}_2 \leq \Norm{f(\tilde{x},u,d)}_2 \delta_0$ and $\Norm{f(\tilde{x},u,d)}_2 > 0$,
    \begin{align*}
            &\max \big \{ \Norm{A(x,u,d) - A(\tilde{x},u,d)}_2, \\
            & \quad \quad \ \ \Norm{B(x,u,d) - B(\tilde{x},u,d)}_2   \big \} \\
            &\leq \sigma \Parenth{ \frac{\Norm{x - \tilde{x}}_2}{\Norm{f(\tilde{x},u,d)}_2} }. 
    \end{align*}
    \end{enumerate}
    Then, Ass.~\ref{assump:dt-diff} is satisfied, and $\{(\tilde{A}_{\tau},\tilde{B}_{\tau})\}_{\tau > 0}$ \eqref{eqn:dt-linearized-A}-\eqref{eqn:dt-linearized-B} is consistent with $(A,B)$ \eqref{eqn:ct-linearized-AB} on $\mathcal{X}\times\mathcal{U}\times\mathcal{D}$ in the sense of Def.~\ref{def:consistent-lineariztions}, i.e., condition~2 of Thm.~\ref{thm:ct-implies-dt} is satisfied, with
    \begin{align}
        \tau_0 = 2 \delta_0, \quad \rho(\tau) = \sigma\Parenth{ \frac{\tau}{2} } + \frac{\tau}{2}L_f^2.
    \end{align}
\end{lemma}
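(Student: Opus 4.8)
The plan is to read off Ass.~\ref{assump:dt-diff} directly from Ass.~\ref{assump:ct-cont-diff}, then compute the linearizations $\tilde{A}_\tau,\tilde{B}_\tau$ of the RK2 map by the chain rule, expand each of the two difference quotients in Def.~\ref{def:consistent-lineariztions} as a ``Jacobian-mismatch'' term plus an $O(\tau)$ term, and bound the former with condition~2 and the latter with condition~1. Concretely, write $G_\tau(x,u,d) := x + \tfrac{\tau}{2}f(x,u,d)$, so that $F^{\textnormal{RK2}}_\tau(x,u,d) = x + \tau f\Parenth{G_\tau(x,u,d),u,d}$. By Ass.~\ref{assump:ct-cont-diff}, $f$ is $C^1$; hence $G_\tau$ is $C^1$ and so is its composition with $f$, so $F^{\textnormal{RK2}}_\tau$ is $C^1$ for every $\tau>0$, which gives Ass.~\ref{assump:dt-diff}.

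Abbreviating $x_\tau := G_\tau(x,u,d) = x + \tfrac{\tau}{2}f(x,u,d)$ and using $\deriv{x}G_\tau = I_n + \tfrac{\tau}{2}A(x,u,d)$ and $\deriv{u}G_\tau = \tfrac{\tau}{2}B(x,u,d)$, the chain rule yields $\tilde{A}_\tau = I_n + \tau A(x_\tau,u,d)\Parenth{I_n + \tfrac{\tau}{2}A(x,u,d)}$ and $\tilde{B}_\tau = \tau B(x_\tau,u,d) + \tfrac{\tau^2}{2}A(x_\tau,u,d)B(x,u,d)$. Consequently, writing $A = A(x,u,d)$ and $B = B(x,u,d)$, one gets $\tfrac{\tilde{A}_\tau - I_n}{\tau} - A = \Parenth{A(x_\tau,u,d)-A} + \tfrac{\tau}{2}A(x_\tau,u,d)A$ and $\tfrac{\tilde{B}_\tau}{\tau} - B = \Parenth{B(x_\tau,u,d)-B} + \tfrac{\tau}{2}A(x_\tau,u,d)B$.

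For the $O(\tau)$ terms I would invoke condition~1 to get $\Norm{A}_2,\Norm{B}_2 \leq L_f$ and, since the stage point $x_\tau$ also lies in the region where the Lipschitz bound holds (true for $\tau$ small, as $x_\tau\to x$), $\Norm{A(x_\tau,u,d)}_2\leq L_f$; hence $\tfrac{\tau}{2}\Norm{A(x_\tau,u,d)A}_2$ and $\tfrac{\tau}{2}\Norm{A(x_\tau,u,d)B}_2$ are both $\leq\tfrac{\tau}{2}L_f^2$. For the Jacobian-mismatch terms, if $\Norm{f(x,u,d)}_2 = 0$ then $x_\tau = x$ and both terms vanish; otherwise $\Norm{x_\tau - x}_2 = \tfrac{\tau}{2}\Norm{f(x,u,d)}_2$, so for $\tau\in(0,\tau_0]$ with $\tau_0 := 2\delta_0$ we have $\Norm{x_\tau - x}_2 \leq \delta_0\Norm{f(x,u,d)}_2$, and condition~2 applied with base point $\tilde{x} = x$ and perturbed point $x_\tau$ gives $\max\Brace{\Norm{A(x_\tau,u,d)-A}_2,\Norm{B(x_\tau,u,d)-B}_2}\leq\sigma\Parenth{\tfrac{\Norm{x_\tau-x}_2}{\Norm{f(x,u,d)}_2}} = \sigma\Parenth{\tfrac{\tau}{2}}$. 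By the triangle inequality, for all $(x,u,d)\in\mathcal{X}\times\mathcal{U}\times\mathcal{D}$ and $\tau\in(0,\tau_0]$, $\max\Brace{\Norm{\tfrac{\tilde{A}_\tau - I_n}{\tau}-A}_2,\Norm{\tfrac{\tilde{B}_\tau}{\tau}-B}_2}\leq\sigma\Parenth{\tfrac{\tau}{2}}+\tfrac{\tau}{2}L_f^2 =: \rho(\tau)$, and $\rho\in\mathcal{K}$ since $\sigma\in\mathcal{K}$; this is precisely Def.~\ref{def:consistent-lineariztions} with the claimed $\tau_0$ and $\rho$, i.e.\ condition~2 of Thm.~\ref{thm:ct-implies-dt}.

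The chain-rule computation and the final concatenation are routine; the substantive points are: (i) condition~2 is exactly tailored to the RK2 stage, since $\Norm{x_\tau - x}_2$ equals precisely $\tfrac{\tau}{2}\Norm{f(x,u,d)}_2$, so dividing by $\Norm{f(x,u,d)}_2$ makes the argument $\tau/2$ independent of $(x,u,d)$ and produces a uniform modulus $\sigma(\tau/2)$ --- this explains why the hypothesis carries $\Norm{f(\tilde{x},u,d)}_2$ in the denominator and the radius $\Norm{f(\tilde{x},u,d)}_2\delta_0$; (ii) the degenerate case $\Norm{f(x,u,d)}_2 = 0$ has to be handled separately; and (iii) one must make sure the stage point $x_\tau$ is a valid argument for the bounds in conditions~1--2, which is what forces $\tau$ to be small and, to be fully rigorous, may require those conditions on a slightly enlarged set (or that $\mathcal{X}$ already contains the relevant stage points). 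I expect (iii) --- keeping all estimates uniform over $\mathcal{X}$ despite the $(x,u,d)$-dependence of the stage displacement --- to be the only part needing care.
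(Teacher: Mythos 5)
Your proposal is correct and follows essentially the same route as the paper's proof: the same chain-rule expressions for $\tilde{A}_\tau$ and $\tilde{B}_\tau$, the same split into a Jacobian-mismatch term (bounded via condition~2 at the stage point $x+\tfrac{\tau}{2}f$, using $\tau\leq 2\delta_0$) and an $O(\tau)$ cross term bounded by $\tfrac{\tau}{2}L_f^2$ via the Lipschitz bound, yielding the same $\tau_0$ and $\rho$. You are in fact slightly more careful than the paper, which treats the degenerate case $f(x,u,d)=0$ and the implicit requirement that the stage point be a valid argument for the bounds in conditions~1--2 (your point~(iii)) without comment.
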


\begin{remark} \label{rem:rk2}
    Intuitively, condition~2 of Lem.~\ref{lemma:rk2} says that when $({x},u,d) \in \mathcal{X}\times\mathcal{U}\times\mathcal{D}$ is such that the magnitude of $f({x},u,d)$ is large, the slope of the linearizations (w.r.t. the first argument) at $({x},u,d)$ is less steep. 
    This 
    holds
    in the case of linear system dynamics globally, since $A(x,u,d)=A(\tilde{x},u,d)$ and $B(x,u,d)=B(\tilde{x},u,d)$ for all $d \in \mathbb{R}^m$, $u \in \mathbb{R}^q$, and $x,\tilde{x}\in\mathbb{R}^n$.
    However, it is not implied by Lipschitz continuity of $f$. 
    For example, consider
    $f(x) = x + \sin(x)$, where $A(x) = \deriv{x} f(x) = 1 + \cos(x)$ is uniformly bounded in $x \in \mathbb{R}$ and therefore $f$ is Lipschitz continuous. Although $\lim_{x \rightarrow \infty}f(x) = +\infty$, the slope of $A(x)$ (given by $\deriv{x} A(x) = -\sin(x)$) keeps oscillating at the same frequency as $x \rightarrow \infty$, and hence does not satisfy condition~2.
\end{remark}

Rem.~\ref{rem:rk2} shows that condition~2 of Lem.~\ref{lemma:rk2} can be restrictive in general. However, this is satisfied when ${\mathcal{X}}$, ${\mathcal{U}}$, and ${\mathcal{D}}$ are compact, as shown in Lem.~\ref{lemma:compact2}. This is practically relevant when gridding is used to verify the LMIs (recall Rem.~\ref{rem:finite}).

\begin{lemma} \label{lemma:compact2}
    Consider a set $\mathcal{X}\times\mathcal{U}\times\mathcal{D}$, and the CT system \eqref{eqn:system-dynamics}-\eqref{eqn:output-equation}. 
    Suppose $f$ is twice continuously differentiable, and that ${\mathcal{X}}$, ${\mathcal{D}}$, and ${\mathcal{U}}$, are compact. Then, 
    condition~2 in Lem.~\ref{lemma:rk2} is satisfied for any $\delta_0 > 0$ with $\sigma(s) = L_{df} \cdot c_f \cdot s$, where $c_f=\max_{(x,u,d)\in{\mathcal{X}}\times{\mathcal{U}}\times{\mathcal{D}}}\Norm{f(x,u,d)}_2$, and $L_{df}>0$ is the Lipschitz constant of $\begin{bmatrix}
        A & B
    \end{bmatrix}$ on $\mathcal{X}\times\mathcal{U}\times\mathcal{D}$. 
\end{lemma}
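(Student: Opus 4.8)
The plan is to reduce the ratio-type estimate in condition~2 of Lem.~\ref{lemma:rk2} to an ordinary Lipschitz bound on the linearization $\begin{bmatrix} A & B\end{bmatrix}$, exploiting that on the compact set $\mathcal{X}\times\mathcal{U}\times\mathcal{D}$ the velocity $\Norm{f(\tilde{x},u,d)}_2$ is uniformly dominated by $c_f$.

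First I would establish the Lipschitz constant and the velocity bound. Since $f$ is twice continuously differentiable, the linearization $\begin{bmatrix} A & B\end{bmatrix}=\begin{bmatrix}\deriv{x}f & \deriv{u}f\end{bmatrix}$ is continuously differentiable, so its Jacobian is continuous and hence bounded on the convex hull of $\mathcal{X}\times\mathcal{U}\times\mathcal{D}$, which is compact since $\mathcal{X}$, $\mathcal{U}$, and $\mathcal{D}$ are compact. The mean value inequality on this convex set then yields a finite $L_{df}>0$ such that $\begin{bmatrix}A & B\end{bmatrix}$ is $L_{df}$-Lipschitz on $\mathcal{X}\times\mathcal{U}\times\mathcal{D}$. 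Holding $(u,d)$ fixed and varying only $x$, and using that the induced $2$-norm of a block row dominates the $2$-norm of each of its blocks, this gives for all $d\in\mathcal{D}$, $u\in\mathcal{U}$, $x,\tilde{x}\in\mathcal{X}$ that $\max\Brace{\Norm{A(x,u,d)-A(\tilde{x},u,d)}_2,\Norm{B(x,u,d)-B(\tilde{x},u,d)}_2}\leq L_{df}\Norm{x-\tilde{x}}_2$. Separately, $(x,u,d)\mapsto\Norm{f(x,u,d)}_2$ is continuous, so it attains a finite maximum $c_f$ on the compact set; in particular $\Norm{f(\tilde{x},u,d)}_2\leq c_f$ for every $(\tilde{x},u,d)\in\mathcal{X}\times\mathcal{U}\times\mathcal{D}$.

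Next I would carry out the conversion. Fix an arbitrary $\delta_0>0$ and put $\sigma(s):=L_{df}\,c_f\,s$, which is of class $\mathcal{K}$ (continuous, strictly increasing, vanishing at $0$; here $L_{df}>0$ may be chosen positive, and if $c_f=0$ then $f$ vanishes on the set so the premise $\Norm{f(\tilde{x},u,d)}_2>0$ is never met and condition~2 holds vacuously). For any $d\in\mathcal{D}$, $u\in\mathcal{U}$, $x,\tilde{x}\in\mathcal{X}$ with $\Norm{f(\tilde{x},u,d)}_2>0$ --- note the other hypothesis $\Norm{x-\tilde{x}}_2\leq\Norm{f(\tilde{x},u,d)}_2\delta_0$ is not used, which is why $\delta_0$ is arbitrary --- combining the two bounds above with $\Norm{f(\tilde{x},u,d)}_2\leq c_f$ gives
\begin{align*}
\max\Brace{\Norm{A(x,u,d)-A(\tilde{x},u,d)}_2,\Norm{B(x,u,d)-B(\tilde{x},u,d)}_2}
&\leq L_{df}\Norm{x-\tilde{x}}_2 = L_{df}c_f\frac{\Norm{x-\tilde{x}}_2}{c_f}\\
&\leq L_{df}c_f\frac{\Norm{x-\tilde{x}}_2}{\Norm{f(\tilde{x},u,d)}_2}=\sigma\Parenth{\frac{\Norm{x-\tilde{x}}_2}{\Norm{f(\tilde{x},u,d)}_2}},
\end{align*}
which is precisely condition~2 of Lem.~\ref{lemma:rk2} with the claimed $\sigma$ and any $\delta_0>0$.

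The argument is mostly routine; the one step that needs care is obtaining a genuine global Lipschitz constant $L_{df}$ for $\begin{bmatrix}A & B\end{bmatrix}$ on the compact set from the $C^2$ assumption, which I handle by passing to the (compact) convex hull so that the mean value inequality applies. The only real idea is recognizing that the possibly tiny denominator $\Norm{f(\tilde{x},u,d)}_2$ in the target estimate is harmless: the matching factor $\Norm{x-\tilde{x}}_2$ in the numerator together with $\Norm{f(\tilde{x},u,d)}_2\leq c_f$ makes the plain Lipschitz bound $L_{df}\Norm{x-\tilde{x}}_2$ automatically dominated by $\sigma$ evaluated at the ratio.
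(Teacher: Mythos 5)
Your proposal is correct and follows essentially the same route as the paper's proof: obtain the Lipschitz constant $L_{df}$ for $\begin{bmatrix} A & B \end{bmatrix}$ from twice continuous differentiability plus compactness, bound $\Norm{A(x,u,d)-A(\tilde{x},u,d)}_2$ and $\Norm{B(x,u,d)-B(\tilde{x},u,d)}_2$ by $L_{df}\Norm{x-\tilde{x}}_2$, and then use $\Norm{f(\tilde{x},u,d)}_2 \leq c_f$ to dominate this by $\sigma$ evaluated at the ratio. Your extra touches (the convex-hull/mean-value justification of $L_{df}$ and the vacuous case $c_f=0$) are minor refinements of the same argument.
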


\section{Example}

We now
provide an example showing that it takes less computational effort to verify the CT LMI conditions from Def.~\ref{def:ct-lmi} than the DT ones from Def.~\ref{def:dt-lmi} on the ADT system, demonstrating the advantages of our results in Sec.~\ref{sec:results}.

Consider the following system:
\begin{equation} 
\begin{aligned} 
    \dot{x} = f(x,u) &= \begin{bmatrix}
        -2 k_1 x_1^2 + 2k_2 x_2 + u_1 \\
        k_1 x_1^2 - k_2 x_2 + u_2
    \end{bmatrix}, \\
    y &= x_1 + x_2 + u_3,
\end{aligned} \label{eqn:reactor-system}
\end{equation}
with $k_1 = 0.16$ and $k_2 = 0.0064$. Here, $x = \begin{bmatrix} x_1 & x_2 \end{bmatrix}^{\top} \in \mathbb{R}^2$ is the state, $u = \begin{bmatrix}u_1 & u_2 & u_3 \end{bmatrix}^{\top}$ is the disturbance (system input in \eqref{eqn:system-dynamics}), and $y \in \mathbb{R}$ is the output. It corresponds to a reversible chemical reaction taking place in an isothermal gas-phase reactor (see \cite{tenny2002efficient}). Suppose that RK2 discretization via \eqref{eqn:rk2-model} has been used to obtain an ADT system model.

To see that it is easier to verify the CT LMI conditions than the DT ones on the ADT systems, we will compare the linearizations of the respective systems.
Firstly, note that the linearizations of the CT dynamics \eqref{eqn:reactor-system} are
\begin{align}
    &A(x)= \begin{bmatrix}
        -4k_1x_1 & 2 k_2 \\ 2k_1x_1 & -k_2
    \end{bmatrix}, \quad && B = \begin{bmatrix}
        I_2 & 0_{2 \times 1}
    \end{bmatrix}. \label{eqn:reactor-ct-linearization-1}
\end{align}
In contrast, the linearizations of the RK2 dynamics are
\begin{align*}
    &\tilde{A}_{\tau}(x,u) = \deriv{x} \Brack{x + \tau f\Parenth{x + \frac{\tau}{2}f(x,u),u}} \\
    &= I + \tau \begin{bmatrix}
        -4k_1 \Parenth{x_1 + \frac{\tau}{2} \Parenth{ -2 k_1 x_1^2 + 2 k_2 x_2 + u_1}} & 2 k_2 \\ 2k_1 \Parenth{x_1 + \frac{\tau}{2} \Parenth{ -2 k_1 x_1^2 + 2 k_2 x_2 + u_1}} & -k_2
    \end{bmatrix} \\
    & \quad \cdot \Parenth{I + \frac{\tau}{2} \begin{bmatrix}
        -4k_1x_1 & 2 k_2 \\ 2k_1x_1 & -k_2
    \end{bmatrix} } 
\end{align*}
and
\begin{align*}
    &\tilde{B}_{\tau}(x,u) = \deriv{u} \Brack{x + \tau f\Parenth{x + \frac{\tau}{2}f(x,u),u}} \\
    &= \frac{\tau^2}{2} \begin{bmatrix}
        -4k_1 \Parenth{x_1 + \frac{\tau}{2} \Parenth{ -2 k_1 x_1^2 + 2 k_2 x_2 + u_1}} & 2 k_2 & 0 \\ 2k_1 \Parenth{x_1 + \frac{\tau}{2} \Parenth{ -2 k_1 x_1^2 + 2 k_2 x_2 + u_1}} & -k_2 & 0
    \end{bmatrix} \\
    & \quad + \tau \begin{bmatrix}
        I_2 & 0_{2 \times 1}
    \end{bmatrix} \label{eqn:reactor-dt-linearization-2}
\end{align*} 
for $\tau > 0$.
The CT system linearizations are clearly simpler, and therefore easier to compute than the ADT system.
Moreover, 
verification of the LMI conditions for the CT system can be performed over a lower-dimensional space compared to the ADT system since ($\tilde{A}_{\tau}$, $\tilde{B}_{\tau}$) depend on $(x,u)$, whereas ($A$,$B$) depend only on $x$. This is significant when verifying the LMIs via gridding (recall Rem.~\ref{rem:finite}), since the computation time suffers from the curse of dimensionality. 

Numerical experiments were also performed. The linearizations of the CT and ADT models were computed over the set $\mathcal{X}\times\mathcal{U} = [0.1,0.5]^2\times[-0.1,0.1]^3$ on a laptop with an Intel Core i7-13850HX CPU running MATLAB, with 100 grid points in every direction. Computing the linearizations took $1.27\times10^{-4}$~s for the CT system, versus $1.16$~s for the ADT system. We exploited the fact that the CT linearizations depend only on $x_1$, and the ADT linearizations depend on $x_1$, $x_2$, and $u_1$. This demonstrates the benefit of our approach.
\section{Conclusion}
In this work, we investigated the problem of obtaining checkable conditions on CT systems, that can be used to verify i-IOSS of families of ADT systems parameterized by sampling period and obtained via some numerical discretization method.
Under a Lipschitz assumption on the CT system,
we showed that the satisfaction of LMIs for the CT system enables verification of analogous infinite-dimensional LMIs for the corresponding family of ADT system. This is achieved supposing that the linearizations of the ADT system are consistent with the linearizations of the CT system, and the sampling period is sufficiently small. We then make connections with \cite{schiller2023lyapunov} to show that this is sufficient for the construction of i-IOSS Lyapunov functions for the family of ADT systems. 
Subsequently, we show that discretizaton via the Euler and RK2 methods produces families of ADT systems with linearizations that are consistent with those of the CT system, under regularity conditions on the CT system.
Lastly, a simple example is provided showing that it is more computationally efficient to verify finite-dimensional approximations of the CT LMIs than the DT ones on the ADT system, further highlighting the utility of our result. 

In future work, it would be interesting to investigate how the consistency of linearizations in Def.~\ref{def:consistent-lineariztions} is related to standard notions of consistency for approximately discretized systems (see \cite{nesic1999sufficient,nesic2004framework,arcak2004framework,Burden2010}), and generalize our results beyond i-IOSS to incremental dissipation inequalities.
\appendix
\begin{proof}[Proof of Thm.~\ref{thm:ct-implies-dt}]
    In the first part of this proof, we will establish that for all $(x,u,d)\in {\mathcal{X}}\times{\mathcal{U}}\times{\mathcal{D}}$ and $\tau \in (0,\tau_0]$,
    \begin{align}
        &\begin{bmatrix}
            \tilde{A}_{\tau}^{\top} P \tilde{A}_{\tau}  & \tilde{A}_{\tau}^{\top} P \tilde{B}_{\tau}  \\
            \tilde{B}_{\tau}^{\top} P \tilde{A}_{\tau} & \tilde{B}_{\tau}^{\top} P \tilde{B}_{\tau}
        \end{bmatrix} \preceq \eta(\tau) \begin{bmatrix}
            P & 0 \\ 0 & 0
        \end{bmatrix} \\
        & \quad + \tau \begin{bmatrix}
            C^{\top} R C & C^{\top} R D \\ D^{\top} R C & D^{\top} R D + {Q + \alpha(\tau) \lambdamin(P) I_q}
        \end{bmatrix}. \quad \label{eqn:theorem:ct-implies-dt-11}
    \end{align}
    Let $r_{\tau}(x,u,d) = F_{\tau}(x,u,d) - x - \tau f(x,u,d)$ denote the error between the ADT system dynamics $F_{\tau}(x,u,d)$ and the Euler dynamics $x + \tau f(x,u,d)$. 
    For the sake of brevity, denote $F_{\tau}(x,u,d)$, $f(x,u,d)$ and $r_{\tau}(x,u,d)$, by $F_{\tau}$, $f$, and $r_{\tau}$ respectively.
    Then, for all $(x,u,d)\in {\mathcal{X}}\times{\mathcal{U}}\times{\mathcal{D}}$ and $\tau > 0$, the linearizations \eqref{eqn:dt-linearized-A} and \eqref{eqn:dt-linearized-B} can be rewritten as
    \begin{align}
        \tilde{A}_{\tau} &= \deriv{x} F_{\tau} = \deriv{x}\Brack{x + \tau f + r_{\tau}} = I_n + \tau A + \deriv{x} r_{\tau}, \label{eqn:theorem:ct-implies-dt-1} \quad \\
        \tilde{B}_{\tau} &= \deriv{u} F_{\tau}= \deriv{u}\Brack{x + \tau f + r_{\tau}}  =\tau B + \deriv{u} r_{\tau}, \label{eqn:theorem:ct-implies-dt-2}
    \end{align}
    by applying \eqref{eqn:ct-linearized-AB}. Using this, it follows that for all $(x,u,d)\in {\mathcal{X}}\times{\mathcal{U}}\times{\mathcal{D}}$ and $\tau > 0$, the LHS of \eqref{eqn:theorem:ct-implies-dt-11} can be rewritten as
    \begin{align}
        &\begin{bmatrix}
            \tilde{A}_{\tau}^{\top} P \tilde{A}_{\tau}  & \tilde{A}_{\tau}^{\top} P \tilde{B}_{\tau}  \\
            \tilde{B}_{\tau}^{\top} P \tilde{A}_{\tau} & \tilde{B}_{\tau}^{\top} P \tilde{B}_{\tau}
        \end{bmatrix} = \begin{bmatrix}
            \tilde{A}_{\tau} & \tilde{B}_{\tau}
        \end{bmatrix}^{\top} P \begin{bmatrix}
            \tilde{A}_{\tau} & \tilde{B}_{\tau}
        \end{bmatrix} \\
        & = \begin{bmatrix}
            \Parenth{I_n + \tau A }^{\top} \\
            \Parenth{\tau B }^{\top}
        \end{bmatrix} P  \begin{bmatrix}
            I_n + \tau A  &
            \tau B 
        \end{bmatrix} \label{eqn:theorem:ct-implies-dt-4} \\
        & \quad + \begin{bmatrix}
            \Parenth{I_n + \tau A }^{\top} \\
            \Parenth{\tau B }^{\top}
        \end{bmatrix} P  \begin{bmatrix}
            {\deriv{x} r_{\tau}} &
            {\deriv{u} r_{\tau}}
        \end{bmatrix} \label{eqn:theorem:ct-implies-dt-5} \\
        & \quad + \begin{bmatrix}
            \Parenth{\deriv{x} r_{\tau}}^{\top} \\
            \Parenth{\deriv{u} r_{\tau}}^{\top}
        \end{bmatrix} P \begin{bmatrix}
            {I_n + \tau A } &
            {\tau B }
        \end{bmatrix} \label{eqn:theorem:ct-implies-dt-6} \\
        & \quad + \begin{bmatrix}
            \Parenth{\deriv{x} r_{\tau}}^{\top} \\
            \Parenth{\deriv{u} r_{\tau}}^{\top}
        \end{bmatrix} P \begin{bmatrix}
            {\deriv{x} r_{\tau}} &
            {\deriv{u} r_{\tau}}
        \end{bmatrix}. \label{eqn:theorem:ct-implies-dt-3}
    \end{align}
    
    We now bound the terms in \eqref{eqn:theorem:ct-implies-dt-4}-\eqref{eqn:theorem:ct-implies-dt-3}. Firstly, \eqref{eqn:theorem:ct-implies-dt-4} is bounded as follows for all $(x,u,d)\in {\mathcal{X}}\times{\mathcal{U}}\times{\mathcal{D}}$ and $\tau > 0$:
    \begin{align}
        &\begin{bmatrix}
            \Parenth{I_n + \tau A }^{\top} \\
            \Parenth{\tau B }^{\top}
        \end{bmatrix} P  \begin{bmatrix}
            {I_n + \tau A } &
            {\tau B }
        \end{bmatrix} \preceq \tau^2 L_f^2 \lambdamax(P) I_{n+q} \\
        & \quad + \Parenth{1-\tau \kappa}\begin{bmatrix}
            P & 0 \\ 0 & 0
        \end{bmatrix} + \tau \begin{bmatrix}
            C^{\top} R C & C^{\top} R D \\ D^{\top} R C & D^{\top} R D + Q
        \end{bmatrix} \label{eqn:theorem:ct-implies-dt-7}
    \end{align}
    where \eqref{eqn:theorem:ct-implies-dt-7} follows since $f$ is $L_f$-Lipschitz, and by using condition 3 in the premise of Thm.~\ref{thm:ct-implies-dt} to apply \eqref{eqn:thm:ct-lmi-implies-dt-lmi-diff-quot-ct-condition}.
    
    Secondly, \eqref{eqn:theorem:ct-implies-dt-5}-\eqref{eqn:theorem:ct-implies-dt-6} can be upper bounded as follows for all $(x,u,d)\in {\mathcal{X}}\times{\mathcal{U}}\times{\mathcal{D}}$ and $\tau \in (0,\tau_0]$:
    \begin{align}
        &\begin{bmatrix}
            \Parenth{I_n + \tau A }^{\top} \\
            \Parenth{\tau B }^{\top}
        \end{bmatrix} P  \begin{bmatrix}
            \Parenth{\deriv{x} r_{\tau}} &
            \Parenth{\deriv{u} r_{\tau}}
        \end{bmatrix} \\
        &\quad + \begin{bmatrix}
            \Parenth{\deriv{x} r_{\tau}}^{\top} \\
            \Parenth{\deriv{u} r_{\tau}}^{\top}
        \end{bmatrix} P \begin{bmatrix}
            \Parenth{I_n + \tau A } &
            \Parenth{\tau B }
        \end{bmatrix} \\
        &\preceq 4\Parenth{1 + \tau L_f } \lambda_{\textnormal{max}}\Parenth{P} \tau \rho\Parenth{\tau} I_{n+q}. \label{eqn:theorem:ct-implies-dt-8}
    \end{align}
    Here, \eqref{eqn:theorem:ct-implies-dt-8} follows since $f$ is $L_f$-Lipschitz, 
    and also from
    \begin{align}
        &\max \Brace{\Norm{\deriv{x} r_{\tau}}_2, \Norm{\deriv{u} r_{\tau}}_2 } \\
        &=\tau \max \Brace{\Norm{\frac{\tilde{A}_{\tau}-I_n}{\tau} - A}_2, \Norm{\frac{\tilde{B}_{\tau}}{\tau} - B}_2 }
        \leq \tau \rho(\tau) \quad \  \label{eqn:theorem:ct-implies-dt-9}
    \end{align}
    for all $(x,u,d)\in {\mathcal{X}}\times{\mathcal{U}}\times{\mathcal{D}}$ and $\tau \in (0,\tau_0]$, where the equality in \eqref{eqn:theorem:ct-implies-dt-9} follows from \eqref{eqn:theorem:ct-implies-dt-1} and \eqref{eqn:theorem:ct-implies-dt-2}, and the inequality holds by the consistency of linearizations in condition~2 of Thm.~\ref{thm:ct-implies-dt}.
    Thirdly, \eqref{eqn:theorem:ct-implies-dt-3} can be upper bounded as follows for all $(x,u,d)\in {\mathcal{X}}\times{\mathcal{U}}\times{\mathcal{D}}$ and $\tau \in (0,\tau_0]$:    
    \begin{align}
        &\begin{bmatrix}
            \Parenth{\deriv{x} r_{\tau}}^{\top} \\
            \Parenth{\deriv{u} r_{\tau}}^{\top}
        \end{bmatrix} P \begin{bmatrix}
            {\deriv{x} r_{\tau}} &
            {\deriv{u} r_{\tau}}
        \end{bmatrix} 
        \preceq 4 \lambdamax(P) \tau^2 \rho(\tau)^2 I_{n+q} \quad \ \label{eqn:theorem:ct-implies-dt-10}
    \end{align}
    where \eqref{eqn:theorem:ct-implies-dt-10} also makes use of \eqref{eqn:theorem:ct-implies-dt-9}.
    
    By combining \eqref{eqn:theorem:ct-implies-dt-4}-\eqref{eqn:theorem:ct-implies-dt-3}, \eqref{eqn:theorem:ct-implies-dt-7}, \eqref{eqn:theorem:ct-implies-dt-8}, and \eqref{eqn:theorem:ct-implies-dt-10}, and using the definition of $\alpha$ from  \eqref{eqn:theorem:ct-implies-dt-alpha1}, we find that for all $\tau \in (0,\tau_0]$ and $(x,u,d)\in {\mathcal{X}}\times{\mathcal{U}}\times{\mathcal{D}}$,
    \begin{align}
        &\begin{bmatrix}
            \tilde{A}_{\tau}^{\top} P \tilde{A}_{\tau}  & \tilde{A}_{\tau}^{\top} P \tilde{B}_{\tau}  \\
            \tilde{B}_{\tau}^{\top} P \tilde{A}_{\tau} & \tilde{B}_{\tau}^{\top} P \tilde{B}_{\tau}
        \end{bmatrix} \preceq \Parenth{1-\tau \kappa}\begin{bmatrix}
            P & 0 \\ 0 & 0
        \end{bmatrix} \label{eqn:theorem:ct-implies-dt-12} \\
        &\quad + \tau \begin{bmatrix}
            C^{\top} R C & C^{\top} R D \\ D^{\top} R C & D^{\top} R D + Q
        \end{bmatrix}  + \tau \alpha(\tau) \lambdamin(P) I_{n+q}
   \end{align}
   Next, using the definition of $\eta$ from \eqref{eqn:theorem:ct-implies-dt-eta}, we know that for all $\tau \in (0,\tau_0]$ and $(x,u,d)\in {\mathcal{X}}\times{\mathcal{U}}\times{\mathcal{D}}$,
    $
        \tau \alpha(\tau) \lambdamin(P) I_n =  \Parenth{ \eta (\tau) - 1 + \kappa \tau } \lambdamin(P) I_n 
        \preceq  \Parenth{ \eta (\tau) - 1 + \kappa \tau } P,
    $
    which can subsequently be rearranged to obtain
    \begin{align}
        \Parenth{1 - \tau \kappa }P + \tau \alpha(\tau) \lambdamin(P) I_n &\preceq  \eta (\tau) P. \label{eqn:theorem:ct-implies-dt-13}
    \end{align}
    Then, we finally arrive at \eqref{eqn:theorem:ct-implies-dt-11} by combining \eqref{eqn:theorem:ct-implies-dt-12} with \eqref{eqn:theorem:ct-implies-dt-13}. 
    concluding the first part of this proof.

    In the second part of this proof, we establish that $\eta(\tau) \in (0,1)$ for all $\tau \in (0,\tau_1)$, with $\tau_1$ defined in \eqref{eqn:theorem:ct-implies-dt-tau1}.
    Note that by selecting $\tau \in \Parenth{ 0, \alpha^{-1} \Parenth{ \kappa } }$, $\tau \Parenth{ \alpha(\tau) - \kappa } < 0$ follows.
    Moreover, after selecting $\tau \in \Parenth{0, \frac{1}{\kappa}}$, $-1 < - \tau \kappa < \tau (\alpha(\tau) - \kappa )$ holds.
    It follows that for all $\tau \in \Parenth{0, \tau_1} = \Parenth{ 0, \min \Brace{ \frac{1}{\kappa} , \tau_0, \alpha^{-1}(\kappa) } }$, $-1 < \tau \Parenth{\alpha(\tau) - \kappa} < 0$ holds, which is equivalent to 
    \begin{align}
        0 < \eta(\tau) = \tau \alpha(\tau) + 1 - \kappa \tau < 1, \label{eqn:theorem:ct-implies-dt-14}
    \end{align}
    concluding the second part of the proof.

    We reach
    the conclusion 
    since 
    by
    \eqref{eqn:theorem:ct-implies-dt-11} and \eqref{eqn:theorem:ct-implies-dt-14}, \eqref{eqn:prop:lyapunov-function-lmi-condition} 
    holds
    for all $(x,u,d)\in{\mathcal{X}}\times{\mathcal{U}}\times{\mathcal{D}}$ and $\tau \in (0,\tau_1)$ with Lyapunov matrix $P\in \mathbb{S}_{++}^n$, supply matrices $\tilde{Q}(\tau) \in \mathbb{S}_{++}^q$ and $\tilde{R}(\tau) \in \mathbb{S}_{++}^p$, and factor $\eta(\tau) \in (0,1)$
    (defined 
    in \eqref{eqn:theorem:ct-implies-dt-tau1}-\eqref{eqn:theorem:ct-implies-dt-eta}).
\end{proof}

\begin{proof}[Proof of Cor.~\ref{cor:ct-lmi-implies-lyap}]
    Under the premise of Cor.~\ref{cor:ct-lmi-implies-lyap}, the premise of Thm.~\ref{thm:ct-implies-dt} is satisfied, and so 
    each ADT system \eqref{eqn:approx-system-dynamics}-\eqref{eqn:approx-system-output-equation} with sampling period $\tau \in \Parenth{0,\tau_1}$ satisfies the DT LMI conditions in Def.~\ref{def:dt-lmi} on $\mathcal{X}\times\mathcal{U}\times\mathcal{D}$ with Lyapunov matrix $P\in \mathbb{S}_{++}^n$, supply matrices $\tilde{Q}(\tau) \in \mathbb{S}_{++}^q$ and $\tilde{R}(\tau) \in \mathbb{S}_{++}^p$, and factor $\eta(\tau) \in (0,1)$.
    Combined with Ass.~\ref{assump:h-cont-diff}-\ref{assump:output-affine}, 
    this 
    satisfies
    the premise of Prop.~\ref{prop:lyapunov-function}, yielding the desired conclusion.
\end{proof}

\begin{proof}[Proof of Lem.~\ref{lemma:euler}]
    Firstly, let $F = F^{\textnormal{Euler}}$, such that $\tilde{A}_{\tau} = \deriv{x} F^{\textnormal{Euler}}_{\tau}$ and $\tilde{B}_{\tau} = \deriv{u} F^{\textnormal{Euler}}_{\tau}$.
    Next, note that Ass.~\ref{assump:dt-diff} is automatically satisfied by construction using Ass.~\ref{assump:ct-cont-diff}.
    Now, note that for all $(x,u,d)\in \mathbb{R}^n\times\mathbb{R}^q\times\mathbb{R}^m$ and $\tau > 0$, 
    $
        \lVert (\tilde{A}_{\tau}-I)/\tau - A \rVert_2 = \Norm{((I_n+\tau A)-I_n)/\tau - A}_2 = 0
    $ and $
        \lVert \tilde{B}_{\tau}/\tau - B \rVert_2 = \Norm{(\tau B)/\tau - B}_2 = 0
    $.
    Then, it trivially follows that for any choice of $\rho \in \mathcal{K}$ and $\tau > 0$, $\max \{ \lVert  (\tilde{A}_{\tau}-I)/\tau - A \rVert_2, \lVert \tilde{B}_{\tau}/\tau - B \rVert_2 \} = 0 \leq \rho(\tau)$ for all $(x,u,d)\in\mathbb{R}^n\times\mathbb{R}^q\times\mathbb{R}^m$, yielding the conclusion.    
\end{proof}

\begin{proof}[Proof of Lem.~\ref{lemma:rk2}]
    For the sake of brevity, throughout this proof, we denote $f(x,u,d)$ by $f$.
    Firstly, let $F = F^{\textnormal{RK2}}$, such that $\tilde{A}_{\tau} = \deriv{x} F^{\textnormal{RK2}}_{\tau}$ and $\tilde{B}_{\tau} = \deriv{u} F^{\textnormal{RK2}}_{\tau}$. 
    Next, note that Ass.~\ref{assump:dt-diff} is automatically satisfied by construction using Ass.~\ref{assump:ct-cont-diff}.
    Now note that for all $(x,u,d)\in {\mathcal{X}}\times{\mathcal{U}}\times{\mathcal{D}}$ and $\tau \leq 2 \delta_0$, 
    $
        \tilde{A}_{\tau} = \deriv{x} [ x + \tau f ( x + \tau f/2 ,u,d) ] 
        = I_n + \tau A ( x + \tau f/2,u,d ) \Brack{I_n + \tau A / 2}
    $
    and therefore
    \begin{align}
        &\Norm{\frac{\tilde{A}_{\tau}-I_n}{\tau} - A}_2 \\
        &
        =
        \Norm{A\Parenth{x + \frac{\tau}{2}f,u,d} - A + \frac{\tau}{2} A \Parenth{x + \frac{\tau}{2}f,u,d} A}_2 \\
        &\leq \sigma\Parenth{ \frac{\tau}{2} } + \frac{\tau}{2}L_f^2. \label{eqn:lemma:rk2-1}
    \end{align}
    Here, \eqref{eqn:lemma:rk2-1} follows since $f$ is $L_f$-Lipschitz, and the fact that when $\tau \leq 2\delta_0$, $\Norm{\Parenth{x + \tau f / 2}-x}_2 \leq \Norm{f}_2 \delta_0$ is satisfied, such that condition 2 in Lem.~\ref{lemma:rk2} can be applied to show that
    $
        \Norm{A\Parenth{x + \tau f/2,u,d} - A}_2 \leq 
        \sigma \Parenth{\tau/2}
    $.
    Similarly, for all $(x,u,d)\in {\mathcal{X}}\times{\mathcal{U}}\times{\mathcal{D}}$ and $\tau \leq 2 \delta_0$, 
    $
        \tilde{B}_{\tau} = \deriv{u} [ x + \tau f\Parenth{ x + \tau f/2  ,u,d} ] 
        = \tau A ( x + \tau f / 2,u,d ) [\tau B / 2] + \tau B (x + \tau f / 2,u,d)
        = \tau A (x + \tau f/2,u,d) B / 2 + B (x + \tau f / 2,u,d)
    $,
    and therefore
    $
        \lVert \tilde{B}_{\tau}/\tau - B \rVert_2 
        = \lVert \tau A\Parenth{x + \tau f / 2,u,d} B / 2 + B\Parenth{x + \tau / 2,u,d} - B\rVert_2 
        \leq \sigma\Parenth{\tau/2} + \tau L_f^2/2
    $.
    The conclusion immediately follows by combining this with \eqref{eqn:lemma:rk2-1}.
\end{proof}

\begin{proof}[Proof of Lem.~\ref{lemma:compact2}]
    Note that $L_{df}$ from Lem.~\ref{lemma:compact2} exists since $f$ is twice continuously differentiable, so $\begin{bmatrix}
        A & B
    \end{bmatrix}$ is continuously differentiable, and therefore Lipschitz continuous on a compact set. 
    It follows that for all $d \in {\mathcal{D}}$, $u \in \mathcal{U}$, and $x,\tilde{x} \in \mathcal{X}$, if $\Norm{f(\tilde{x},u,d)}_2 > 0$, then
    \begin{align}
        \Norm{A(x,u,d)-A(\tilde{x},u,d)}_2 &\leq L_{df} \Norm{x - \tilde{x}}_2 \label{eqn:lemma:compact-1}\\
        &\leq L_{df} c_f \frac{\Norm{x - \tilde{x}}_2}{\Norm{f(\tilde{x},u,d)}_2}, \label{eqn:lemma:compact-3}
    \end{align}
    where \eqref{eqn:lemma:compact-3} holds by making use of the continuity of $f$ from Ass.~\ref{assump:ct-cont-diff} and the compactness of $\mathcal{X}\times\mathcal{U}\times\mathcal{D}$, such that $c_f:=\max_{(x,u,d)\in{\mathcal{X}}\times{\mathcal{U}}\times{\mathcal{D}}}\Norm{f(x,u,d)}_2 < \infty$. Following similar steps, we can also establish that $\Norm{B(x,u,d)-B(\tilde{x},u,d)}_2 \leq L_{df} c_f \frac{\Norm{x - \tilde{x}}_2}{\Norm{f(\tilde{x},u,d)}_2}$. Condition~2 of Lem.~\ref{lemma:rk2} therefore follows with $\sigma(s) =  L_{df} \cdot c_f \cdot s $.
\end{proof}

\addtolength{\textheight}{-11.5cm}   


\bibliography{references.bib}
\bibliographystyle{ieeetr.bst}

\end{document}